\author{Martin Radloff}
\renewcommand{\theequation}{\thesection\arabic{equation}}
\newfont{\suet}{suet14}
\newfont{\schwell}{schwell}
\DeclareTextFontCommand{\textsuet}{\suet}
\DeclareTextFontCommand{\textschwell}{\schwell}	
\newcommand{\BIGOP}[1]{\mathop{\mathchoice%
{\raise-0.22em\hbox{\huge $#1$}}%
{\raise-0.05em\hbox{\Large $#1$}}{\hbox{\large $#1$}}{#1}}}
\newcommand{\BIGboxplus}{\mathop{\mathchoice%
{\raise-0.35em\hbox{\huge $\boxplus$}}%
{\raise-0.15em\hbox{\Large $\boxplus$}}{\hbox{\large $\boxplus$}}{\boxplus}}}
\newtheorem{theorem}{Theorem}
\newtheorem{lemma}{Lemma}
\theoremstyle{definition}
\newtheorem{notation}{Notation}
\newcommand{\D}{\mathrm{d}}
\newcommand{\id}{\operatorname{id}}
\begin{document}
	\pagestyle{Kopfi}
	\thispagestyle{empty}
	\vspace*{1.3cm}
	\begin{center}
		{\Large\textbf{Locally $D$-optimal Designs for a Wider Class of Non-linear Models on the $k$-dimensional Ball}\\ \textbf{\large with applications to logit and probit models}}
	\end{center}
	\centerline{Martin Radloff\footnote[2]{corresponding author: Martin Radloff, Institute for Mathematical Stochastics, Otto-von-Guericke-University, PF~4120, 39016~Magdeburg, Germany, \url{martin.radloff@ovgu.de}} and Rainer Schwabe\footnote[3]{Rainer Schwabe, Institute for Mathematical Stochastics, Otto-von-Guericke-University, PF~4120, 39016~Magdeburg, Germany, \url{rainer.schwabe@ovgu.de}}} 
		
	\vspace*{1.3cm}

%

\begin{quotation}
\noindent {\textit{Abstract:}}
In this paper we extend the results of \cite{Radloff:2018}, which could be applied for example to Poisson regression, negative binomial regression and proportional hazard models with censoring, to a wider class of non-linear multiple regression models. This includes the binary response models with logit and probit link besides other. For this class of models we derive (locally) $D$-optimal designs when the design region is a $k$-di\-men\-sion\-al ball. For the corresponding construction we make use of the concept of invariance and equivariance in the context of optimal designs as in our previous paper. In contrast to the former results the designs will not necessarily be exact designs in all cases. Instead approximate designs can appear. These results can be generalized to arbitrary ellipsoidal design regions.

\vspace{9pt}
\noindent {\textit{Key words and phrases:}}
Binary response models, $D$-optimality, $k$-dimensional ball, logit and probit model, multiple regression models.
\par
\end{quotation}\par

\def\thefigure{\arabic{figure}}
\def\thetable{\arabic{table}}

\renewcommand{\theequation}{\thesection.\arabic{equation}}

\fontsize{12}{14pt plus.8pt minus .6pt}\selectfont

\setcounter{equation}{0} 
\section{Introduction}
\label{intro}
	In \cite{Radloff:2018} we found optimal designs for a special class of linear and non-linear models with respect to the $D$-criterion on a $k$-dimensional ball.
	The main result was for (non-linear) multiple regression models, that means the linear predictor is
	\[\boldsymbol{f}(\boldsymbol{x})^\top\boldsymbol{\beta} = \beta_0 + \beta_1 x_1 + \ldots + \beta_k x_k\ .\]
	For this result to hold the one-support-point (or elemental) information matrix should be representable in the form
	\begin{equation*}
		\boldsymbol{M}(\boldsymbol{x},\boldsymbol{\beta})=\lambda\!\left(\boldsymbol{f}(\boldsymbol{x})^\top\boldsymbol{\beta}\right)\boldsymbol{f}(\boldsymbol{x})\boldsymbol{f}(\boldsymbol{x})^\top
	\end{equation*}
	with an intensity (or efficiency) function $\lambda$ which only depends on the value of the linear predictor.
	By using results on equivariance and invariance of \cite{Radloff:2016}, we rotate the design space, the $k$-dimensional unit ball $\mathbb{B}_k$, and the parameter space $\mathbb{R}^{k+1}$ simultaneously in such a way, that the linear predictor of the multiple regression problem collapses to 
		\begin{equation}
			\boldsymbol{f}(\boldsymbol{x})^\top\boldsymbol{\beta} = \beta_0 + \beta_1 x_1 \text{\quad and \quad $\beta_1\geq 0$} .
		\label{eq:simplifiedModel}
		\end{equation}
	So it is possible to reduce that multidimensional problem to a one-dimensional marginal problem. Similar one-dimensional problems have already been investigated, for example in \cite{Konstantinou:2014}.\\	
	In \cite{Radloff:2018} the following four conditions, which can be satisfied by the intensity function $\lambda$, were imposed {(see also \cite{Konstantinou:2014} or \cite{Schmidt:2017})}:
\begin{enumerate}[leftmargin=1cm]
	\item[(A1)] $\lambda$ is positive on $\mathbb{R}$ and twice continuously differentiable.
	\item[(A2)] The first derivative $\lambda^\prime$ is positive on $\mathbb{R}$.
	\item[(A3)] The second derivative $u^{\prime\prime}$ of $u=\frac{1}{\lambda}$ is injective on $\mathbb{R}$.
	\item[(A4)] The function $\frac{\lambda^\prime}{\lambda}$ is non-increasing.
\end{enumerate}
	Poisson regression, negative binomial regression and special proportional hazard models with censoring (see \cite{Schmidt:2017}) fulfill these four conditions.\\
	For a short notation we will use from now on the abbreviation \[q(x_1):=\lambda(\beta_0+\beta_1 x_1)\ .\] For $\beta_1>0$ the properties (A1), (A2), (A3) and (A4) transfer to $q$, respectively, and vice versa. 
	
	In \cite{Radloff:2018} we established the following main result that is reproduced for the readers' convenience.
	
\begin{theorem} \label{Theorem1}
	There is a (locally) $D$-optimal design for the simplified problem \eqref{eq:simplifiedModel} with $\beta_1>0$ and intensity function satisfying (A1)-(A3) that has one support point in\linebreak $(1,0,\ldots,0)^\top$ and the other $k$~support points are the vertices of an arbitrarily rotated, $(k-1)$-dimensional simplex which is maximally inscribed in the intersection of the $k$-dimensional unit ball and a hyperplane with $x_1=x_{12}^\ast$.\\
	For $k\geq 2\ :\ x_{12}^\ast\in(-1,1)$ is solution of 
	\begin{equation*}
		\frac{q^\prime(x_{12}^\ast)}{q(x_{12}^\ast)}=\frac{2\,(1+kx_{12}^\ast)}{k\,(1-x_{12}^{\ast\ 2})}
	\label{eq:L9}
	\end{equation*}
	and for $k=1$\ :\ It is $x_{12}^\ast=-1$ or $x_{12}^\ast\in[-1,1)$ is solution of
	\begin{equation*}
		\frac{q^\prime(x_{12}^\ast)}{q(x_{12}^\ast)}=\frac{2}{1-x_{12}^{\ast}}\ .
	\label{eq:L92}
	\end{equation*}
	In any case, if additionally (A4) is satisfied, the solution $x_{12}^\ast$ is unique.\\ 
	The design is equally weighted with $\frac{1}{k+1}$. 
\end{theorem}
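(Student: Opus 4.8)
First I would exploit the rotational symmetry of the simplified model. Since the linear predictor, and hence $q$, depends on $\boldsymbol{x}$ only through $x_1$, both $\mathbb{B}_k$ and the $D$-criterion are invariant under the group $G$ of rotations fixing the $x_1$-axis, which acts on the regression vector $\boldsymbol{f}(\boldsymbol{x})=(1,x_1,x_2,\ldots,x_k)^\top$ by an orthogonal matrix $Q_g$. Because $\det(Q_g\boldsymbol{M}Q_g^\top)=\det\boldsymbol{M}$ and $\log\det$ is concave, the $G$-average of any design has determinant at least as large, so a $G$-invariant $D$-optimal design exists. For such designs the information matrix is block diagonal, with a $2\times2$ block in the coordinates $(1,x_1)$ and a scalar multiple $m\,I_{k-1}$ in the remaining coordinates, so that $\det\boldsymbol{M}$ factorises as the product of the $2\times2$ determinant and $m^{k-1}$; this is the promised reduction to an essentially one-dimensional problem in the latitude $x_1$.

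Next I would write down the candidate $\xi^*$, namely the pole $(1,0,\ldots,0)^\top$ together with the $k$ vertices of a regular simplex inscribed in $\{x_1=s\}\cap\mathbb{B}_k$, which realises the invariant block structure with the fewest support points. Using $\sum_i v_i=\boldsymbol{0}$ and $\sum_i v_i v_i^\top=\frac{k}{k-1}(1-s^2)\,I_{k-1}$ for the simplex vertices $v_i$, the determinant becomes an explicit elementary function of the pole weight $w_0$ and of $s$. Maximising over $w_0$ yields $w_0=1/(k+1)$, hence equal weights $1/(k+1)$, while the stationarity condition $\frac{\D}{\D s}\log\det\boldsymbol{M}(\xi^*)=0$ collapses after cancellation to $\frac{q'(s)}{q(s)}=\frac{2(1+ks)}{k(1-s^2)}$, which for $k=1$ reduces to $\frac{q'(s)}{q(s)}=\frac{2}{1-s}$ and leaves room for the endpoint value $s=-1$ when no interior stationary point exists. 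For $k\ge2$ a root in $(-1,1)$ exists by the intermediate value theorem from the signs of the right-hand side as $s\to\pm1$, and uniqueness under (A4) follows because $q'/q=\beta_1\,\lambda'/\lambda$ is then non-increasing whereas the right-hand side is increasing in $s$.

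Finally I would certify global optimality with the Kiefer--Wolfowitz equivalence theorem: $\xi^*$ is $D$-optimal iff the sensitivity function $d(\boldsymbol{x},\xi^*)=q(x_1)\,\boldsymbol{f}(\boldsymbol{x})^\top\boldsymbol{M}(\xi^*)^{-1}\boldsymbol{f}(\boldsymbol{x})$ stays below $k+1$ on $\mathbb{B}_k$, with equality at the support points. Inverting the block matrix shows that $d$ depends on $\boldsymbol{x}$ only through $x_1$ and $\|(x_2,\ldots,x_k)\|^2$ and is increasing in the latter, so its maximum lies on the boundary and the whole check reduces to a single-variable inequality $g(t)\le k+1$ on $t\in[-1,1]$. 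A direct computation gives $g(s)=g(1)=k+1$ and shows that the equation defining $s$ is precisely the condition $g'(s)=0$, so $g$ is tangent to the bound at the interior support latitude; it then remains to prove that $g$ nowhere exceeds $k+1$. Rewriting the inequality through $u=1/\lambda$ turns it into $u(\eta_t)\ge\frac{u(\eta_s)}{k}\,\phi(t)$ with $\eta_t=\beta_0+\beta_1 t$ and $\phi$ an explicit quadratic; the difference $D(t)=u(\eta_t)-\frac{u(\eta_s)}{k}\phi(t)$ then satisfies that $D''(t)$ equals $\beta_1^2\,u''(\eta_t)$ minus a constant. Here (A3), injectivity and hence strict monotonicity of $u''$, forces at most one inflection point, which together with the tangential contact at $t=s$ and the endpoint contact at $t=1$ pins the sign and gives $g\le k+1$. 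I expect this final sign analysis --- fixing the inflection structure and ruling out the spurious case in which $t=s$ is a local maximum --- to be the main obstacle, and it is exactly here that (A1)--(A3) are indispensable.
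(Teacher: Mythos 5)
Your proposal follows essentially the same route as the paper: reduction to rotation-invariant designs with the block-diagonal information matrix of Lemma~\ref{lem4}, optimisation of the resulting one-dimensional marginal problem in $x_1$ (equal weights from the weight optimisation, the stationarity equation for $x_{12}^\ast$), and certification via the Kiefer--Wolfowitz theorem using $u=1/\lambda$ and the injectivity of $u''$ from (A3), exactly as in the proof of Lemma~\ref{L6}. Note that Theorem~\ref{Theorem1} is only quoted here from \cite{Radloff:2018}, but your argument matches that source's strategy, and the final sign analysis you flag as the main obstacle is handled there by precisely the root-counting for $v''$ that you describe.
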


	If $\beta_1=0$ then the design consisting of the equally weighted vertices of a regular simplex inscribed in the unit sphere, the boundary of the design space, is (locally) $D$-optimal. The orientation is arbitrary.
	
	In the present paper we want to transfer the results for example to binary response models with logit or probit link. Here the intensity functions do not satisfy the conditions (A2) and (A3).\\
	The corresponding problem of logit and probit models in one dimension has already been investigated by \cite{Ford:1992} and \cite{Biedermann:2006}.
	
	We will give here a natural extension to higher dimensions.

\setcounter{equation}{0} 
\section{General Model Description, Design, and Invariance}
\label{sec:2}
	In the following sections as mentioned in the introduction we want to focus on a class of (non-linear) multiple regression models. Here every observation $Y$ depends on a special setting of control variables, the design point $\boldsymbol{x}$, which is in the design region $\mathscr{X}=\mathbb{B}_k=\linebreak[2]\{\boldsymbol{x}\in\mathbb{R}^k\ :\ x_1^2+\ldots+x_k^2\leq 1\}$, the $k$-dimensional unit ball with $k\in\mathbb{N}$. The regression function $\boldsymbol{f}:\mathscr{X}\to\mathbb{R}^{k+1}$ is considered to be $\boldsymbol{x}\mapsto(1,x_1,\ldots,x_k)^\top$, and the parameter vector $\boldsymbol{\beta}=(\beta_0,\beta_1,\ldots,\beta_k)^\top$ is unknown and lies in the parameter space $\mathscr{B}$. We will take $\mathscr{B}=\mathbb{R}^{k+1}$.
	So the linear predictor is
	\[\boldsymbol{f}(\boldsymbol{x})^\top\boldsymbol{\beta} = \beta_0 + \beta_1 x_1 + \ldots + \beta_k x_k\ .\] 	 	
	A second requirement is that the one-support-point (or elemental, see \cite{Atkinson:2014}) information matrix $\boldsymbol{M}(\boldsymbol{x},\boldsymbol{\beta})$ can be written as
	\begin{equation*}
		\boldsymbol{M}(\boldsymbol{x},\boldsymbol{\beta})=\lambda\!\left(\boldsymbol{f}(\boldsymbol{x})^\top\boldsymbol{\beta}\right)\boldsymbol{f}(\boldsymbol{x})\boldsymbol{f}(\boldsymbol{x})^\top
	\end{equation*}
	with an intensity (or efficiency) function $\lambda$ (see \citet[Section~1.5]{Fedorov:1972}) which only depends on the value of the linear predictor.	
	
	We want to find optimal designs on the the $k$-dimensional unit ball for those problems. This will be done in the sense of $D$-optimality, which is a very popular criterion and minimizes the volume of the (asymptotic) confidence ellipsoid.\\
	For that account we need the concept of information matrices. In our case the information matrix of a (generalized) design $\xi$ with independent observations is
	\begin{equation*}
		\boldsymbol{M}(\xi,\boldsymbol{\beta})=\int_\mathscr{X}\boldsymbol{M}(\boldsymbol{x},\boldsymbol{\beta})\ \xi(\D \boldsymbol{x})=\int_\mathscr{X}\lambda\!\left(\boldsymbol{f}(\boldsymbol{x})^\top\boldsymbol{\beta}\right)\boldsymbol{f}(\boldsymbol{x})\boldsymbol{f}(\boldsymbol{x})^\top \xi(\D \boldsymbol{x})\ .
	\end{equation*}	
	Here generalized design does not only mean design on a discrete set of design points. It means an arbitrary probability measure on the design region. In contrast a discrete design has a discrete probability measure with discrete or finite support, see, for example, \cite{Silvey:1980}. \\ 	
	So we can define: A design $\xi^\ast$ with regular information matrix $\boldsymbol{M}(\xi^\ast,\boldsymbol{\beta})$ is called (locally) $D$-optimal (at $\boldsymbol{\beta}$) if $\det(\boldsymbol{M}(\xi^\ast,\boldsymbol{\beta}))\geq\det(\boldsymbol{M}(\xi,\boldsymbol{\beta}))$ holds for all possible probability measures $\xi$ on $\mathscr{X}$. 
	
\begin{notation}
	The symbol $\mathbb{S}_{d-1}$, $d\in\{2,3,4,\ldots\}$, describes the unit sphere, which is the surface of a $d$-dimensional unit ball $\mathbb{B}_d$. 
	Introducing notations we also mention $\mathbb{O}_d$ the $d$-dimensional zero-vector, $\mathbb{O}_{d_1\times d_2}$ the $(d_1\times d_2)$-dimensional zero-matrix, $\mathds{1}_d$ the $d$-dimensional one-vector, $\mathbb{I}_d$ the $(d\times d)$-di\-men\-sion\-al identity matrix and $\id$ the identity function.
\end{notation}

	Now we collect some results and lemmas from \cite{Radloff:2018} which will also be valid and helpful for our current endeavour.

\begin{lemma} \label{L2} 
	Any (locally) $D$-optimal design is concentrated on the surface of $\mathscr{X}=\mathbb{B}_k$ and is equivariant with respect to rotations.
\end{lemma}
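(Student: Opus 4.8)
The plan is to treat this as two separate claims about a fixed locally $D$-optimal design $\xi^\ast$ at $\boldsymbol{\beta}$, and to exploit throughout that $\boldsymbol{M}(\xi^\ast,\boldsymbol{\beta})$ is regular, so that $\boldsymbol{C}:=\boldsymbol{M}(\xi^\ast,\boldsymbol{\beta})^{-1}$ is positive definite. Writing $\boldsymbol{\beta}_{(1)}=(\beta_1,\ldots,\beta_k)^\top$ for the slope part, the workhorse is the general equivalence theorem for $D$-optimality: $\xi^\ast$ is optimal if and only if the sensitivity function
\[
	\psi(\boldsymbol{x})=\lambda\!\left(\boldsymbol{f}(\boldsymbol{x})^\top\boldsymbol{\beta}\right)\,\boldsymbol{f}(\boldsymbol{x})^\top\boldsymbol{C}\,\boldsymbol{f}(\boldsymbol{x})
\]
satisfies $\psi(\boldsymbol{x})\le k+1$ on all of $\mathscr{X}=\mathbb{B}_k$; integrating this against $\xi^\ast$ yields $\operatorname{tr}(\boldsymbol{C}\,\boldsymbol{M}(\xi^\ast,\boldsymbol{\beta}))=k+1$, whence $\psi=k+1$ must hold $\xi^\ast$-almost everywhere and the support of $\xi^\ast$ is contained in $\{\psi=k+1\}$.

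For the concentration on the surface I would slice the ball by the level sets of the linear predictor. For a value $t$ in the range of $\boldsymbol{f}(\cdot)^\top\boldsymbol{\beta}$ on $\mathbb{B}_k$, the set $D_t=\{\boldsymbol{x}\in\mathbb{B}_k:\beta_0+\boldsymbol{\beta}_{(1)}^\top\boldsymbol{x}=t\}$ is a $(k-1)$-dimensional disk whose relative boundary $\partial D_t$ lies on the sphere $\mathbb{S}_{k-1}$. The crucial point is that $\lambda$ is constant ($=\lambda(t)$) on the whole of $D_t$, so there $\psi(\boldsymbol{x})=\lambda(t)\,\boldsymbol{f}(\boldsymbol{x})^\top\boldsymbol{C}\,\boldsymbol{f}(\boldsymbol{x})$ is a \emph{strictly convex} quadratic in $\boldsymbol{x}$ because $\boldsymbol{C}\succ0$ and $\lambda(t)>0$ (here $k\ge2$ is used, so that $D_t$ is genuinely higher-dimensional). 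A strictly convex function on a closed ball attains its maximum only on the bounding sphere, hence $\{\psi=k+1\}\cap D_t\subseteq\partial D_t\subseteq\mathbb{S}_{k-1}$. Since the $D_t$ exhaust $\mathbb{B}_k$, the support of $\xi^\ast$ is forced onto $\mathbb{S}_{k-1}$. I would stress that this slicing is exactly what lets the argument survive for the non-monotone intensities $\lambda$ of the logit and probit models: restricting to a level set removes the dependence on the shape of $\lambda$ entirely and leaves a purely quadratic extremal problem.

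For the equivariance I would use the subgroup $G=\{\boldsymbol{Q}\in O(k):\boldsymbol{Q}\boldsymbol{\beta}_{(1)}=\boldsymbol{\beta}_{(1)}\}$ of rotations about the $\boldsymbol{\beta}_{(1)}$-axis. Each $\boldsymbol{Q}\in G$ maps $\mathbb{B}_k$ onto itself and preserves the linear predictor, since $\boldsymbol{\beta}_{(1)}^\top\boldsymbol{Q}\boldsymbol{x}=\boldsymbol{\beta}_{(1)}^\top\boldsymbol{x}$; with $\boldsymbol{Q}_f=\mathrm{diag}(1,\boldsymbol{Q})$ one has $\boldsymbol{f}(\boldsymbol{Q}\boldsymbol{x})=\boldsymbol{Q}_f\boldsymbol{f}(\boldsymbol{x})$ and therefore $\boldsymbol{M}(\boldsymbol{Q}\boldsymbol{x},\boldsymbol{\beta})=\boldsymbol{Q}_f\boldsymbol{M}(\boldsymbol{x},\boldsymbol{\beta})\boldsymbol{Q}_f^\top$. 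Pushing a design $\xi$ forward to $\boldsymbol{Q}\xi$ gives $\boldsymbol{M}(\boldsymbol{Q}\xi,\boldsymbol{\beta})=\boldsymbol{Q}_f\boldsymbol{M}(\xi,\boldsymbol{\beta})\boldsymbol{Q}_f^\top$, and as $\boldsymbol{Q}_f$ is orthogonal the determinant is unchanged; this is the equivariance setup of \cite{Radloff:2016}, and it shows the class of optimal designs is $G$-invariant. To obtain a genuinely invariant optimal design I would symmetrize: averaging $\boldsymbol{Q}\xi^\ast$ over the Haar measure of the compact group $G$ yields a $G$-invariant design $\bar{\xi}$, and by concavity of $\log\det$ together with Jensen's inequality $\det\boldsymbol{M}(\bar{\xi},\boldsymbol{\beta})\ge\det\boldsymbol{M}(\xi^\ast,\boldsymbol{\beta})$, so $\bar{\xi}$ is optimal as well and one may restrict the search to rotation-equivariant designs.

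The main obstacle I anticipate is precisely the boundary step for a general (possibly non-monotone) $\lambda$: a naive attempt to move an interior support point radially outward, or to dominate its elemental matrix in the Loewner order by a boundary mixture, fails because $\lambda$ and $\boldsymbol{f}(\boldsymbol{x})\boldsymbol{f}(\boldsymbol{x})^\top$ change simultaneously and the resulting cross terms need not vanish. The device that resolves this is the foliation of $\mathbb{B}_k$ into level disks of the linear predictor, on which $\lambda$ is frozen and strict convexity of the sensitivity function can be invoked directly. The case $k=1$, where these disks degenerate to single points and an interior support point may genuinely occur, falls outside this argument and would be discussed separately.
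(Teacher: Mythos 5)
The paper does not actually prove this lemma in the text you were given --- it is imported from Radloff and Schwabe (2018) in the sentence ``Now we collect some results and lemmas from \cite{Radloff:2018}\ldots'' --- so the comparison is with the argument of that earlier paper. Your first half, concentration on the sphere, is sound and is essentially the standard route: by the Kiefer--Wolfowitz theorem the support of $\xi^\ast$ lies in $\{\psi=k+1\}$, and on each level set $D_t$ of the linear predictor the intensity is frozen, so $\psi$ restricted to $D_t$ is a positive multiple of the quadratic form $\boldsymbol{f}(\boldsymbol{x})^\top\boldsymbol{C}\,\boldsymbol{f}(\boldsymbol{x})$, whose Hessian in $\boldsymbol{x}$ is the positive definite block $2\boldsymbol{C}_{11}$; strict convexity then pushes all maximizers to the relative boundary $\partial D_t\subseteq\mathbb{S}_{k-1}$. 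This slicing is indeed what makes the lemma survive the non-monotone intensities of Section~3, and your caveat that $k=1$ must be treated separately is correct and consistent with Theorem~\ref{Theorem1}, which explicitly allows an interior support point in that case.

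The gap is in the second half. ``Equivariant with respect to rotations'' is defined in the paper as: if the design region is rotated by an \emph{arbitrary} $\boldsymbol{Q}\in O(k)$, then $\boldsymbol{Q}\xi^\ast$ is locally $D$-optimal at the correspondingly rotated parameter $\boldsymbol{Q}_f\boldsymbol{\beta}$ with $\boldsymbol{Q}_f=\mathrm{diag}(1,\boldsymbol{Q})$. This is precisely what licenses the reduction to the canonical form \eqref{eq:speccase} via a rotation $\boldsymbol{\tilde{g}}$ that does \emph{not} fix $(\beta_1,\ldots,\beta_k)^\top$. You instead restrict to the stabilizer $G$ of the slope vector and prove invariance of optimality under $G$ plus a Haar symmetrization; that is the content of Lemma~\ref{lem3} and of the subsequent restriction to invariant designs, not of Lemma~\ref{L2}. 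The repair is small: your identities $\boldsymbol{f}(\boldsymbol{Q}\boldsymbol{x})=\boldsymbol{Q}_f\boldsymbol{f}(\boldsymbol{x})$ and $\boldsymbol{M}(\boldsymbol{Q}\xi,\boldsymbol{Q}_f\boldsymbol{\beta})=\boldsymbol{Q}_f\boldsymbol{M}(\xi,\boldsymbol{\beta})\boldsymbol{Q}_f^\top$ hold for every orthogonal $\boldsymbol{Q}$ once the parameter is transported along, because $\boldsymbol{f}(\boldsymbol{Q}\boldsymbol{x})^\top\boldsymbol{Q}_f\boldsymbol{\beta}=\boldsymbol{f}(\boldsymbol{x})^\top\boldsymbol{\beta}$, and orthogonality of $\boldsymbol{Q}_f$ gives equality of determinants, hence a determinant-preserving bijection between designs at $\boldsymbol{\beta}$ and at $\boldsymbol{Q}_f\boldsymbol{\beta}$. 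As written, however, your argument establishes a different (and for the paper's purposes weaker) statement than the one claimed.
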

	Equivariance in this context means: If the design or design region is rotated, the parameter space must be rotated in a corresponding way. For detailed information see \cite{Radloff:2016, Radloff:2018}.\\
	For an initial guess $(\beta_1,\ldots,\beta_k)^\top\neq\mathbb{O}_k$ --- the case $=\mathbb{O}_k$ is discussed later --- there is a rotation $\boldsymbol{\tilde{g}}$ such that $\boldsymbol{\tilde{g}}(\beta_0,\beta_1,\ldots,\beta_k)^\top=(\beta_0,\tilde{\beta}_1,0,\ldots,0)$ with $\tilde{\beta}_1=||(\beta_1,\ldots,\beta_k)^\top||>0$, where $||\cdot||$ is the ($k$-dimensional) Euclidean norm. In view of the equivariance and without loss of generality only the case $\boldsymbol{\beta}\in\mathbb{R}^{k+1}$ with
	\begin{equation}
		\label{eq:speccase}
			\beta_1\geq 0, \beta_2=\ldots=\beta_k=0
	\end{equation}
	has to be considered for optimization. This simplifies our problem of finding a (locally) $D$-optimal design with an initial guess of the parameter vector in the whole parameter space to only the length of this vector.

\begin{lemma} 
\label{lem3}
	For $\boldsymbol{\beta}$ satisfying~\eqref{eq:speccase} the $D$-criterion is invariant with respect to rotations of $x_2,\ldots,x_k$.
\end{lemma}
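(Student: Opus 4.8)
The plan is to show that such a rotation changes the information matrix only by conjugation with an orthogonal matrix, so that its determinant, and hence the $D$-criterion value, is preserved. A rotation of the coordinates $x_2,\ldots,x_k$ can be written as the map $\boldsymbol{x}\mapsto\boldsymbol{R}\boldsymbol{x}$ on $\mathscr{X}=\mathbb{B}_k$ with
\[
\boldsymbol{R}=\begin{pmatrix} 1 & \mathbb{O}_{k-1}^\top \\ \mathbb{O}_{k-1} & \boldsymbol{Q}\end{pmatrix},
\]
where $\boldsymbol{Q}$ is a $(k-1)\times(k-1)$ rotation matrix. Since $\boldsymbol{R}$ is orthogonal it preserves the Euclidean norm and hence maps $\mathbb{B}_k$ onto itself. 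Denote by $\xi^{\boldsymbol{R}}$ the image (push-forward) design, characterised by $\int h\,\D\xi^{\boldsymbol{R}}=\int h(\boldsymbol{R}\boldsymbol{x})\,\xi(\D\boldsymbol{x})$ for integrable $h$.

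The two facts I would establish are these. First, for $\boldsymbol{\beta}$ of the form~\eqref{eq:speccase} the linear predictor collapses to $\boldsymbol{f}(\boldsymbol{x})^\top\boldsymbol{\beta}=\beta_0+\beta_1 x_1$ and therefore depends on the first coordinate only; as $\boldsymbol{R}$ leaves $x_1$ fixed, the value of the linear predictor — and with it the intensity $\lambda(\boldsymbol{f}(\boldsymbol{x})^\top\boldsymbol{\beta})$ — is unchanged along the rotation. Second, the regression function transforms linearly,
\[
\boldsymbol{f}(\boldsymbol{R}\boldsymbol{x})=\boldsymbol{\tilde{R}}\,\boldsymbol{f}(\boldsymbol{x}),\qquad \boldsymbol{\tilde{R}}=\begin{pmatrix}\mathbb{I}_2 & \mathbb{O}_{2\times(k-1)}\\ \mathbb{O}_{(k-1)\times 2} & \boldsymbol{Q}\end{pmatrix},
\]
and $\boldsymbol{\tilde{R}}$ is again orthogonal because $\boldsymbol{Q}$ is.

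Combining both facts, I would substitute into the defining integral of the information matrix and pull the constant matrix $\boldsymbol{\tilde{R}}$ out of the integral to obtain
\[
\boldsymbol{M}(\xi^{\boldsymbol{R}},\boldsymbol{\beta})=\int\lambda\!\left(\boldsymbol{f}(\boldsymbol{x})^\top\boldsymbol{\beta}\right)\boldsymbol{\tilde{R}}\,\boldsymbol{f}(\boldsymbol{x})\boldsymbol{f}(\boldsymbol{x})^\top\boldsymbol{\tilde{R}}^\top\,\xi(\D\boldsymbol{x})=\boldsymbol{\tilde{R}}\,\boldsymbol{M}(\xi,\boldsymbol{\beta})\,\boldsymbol{\tilde{R}}^\top .
\]
Taking determinants and using $(\det\boldsymbol{\tilde{R}})^2=1$ then yields $\det\boldsymbol{M}(\xi^{\boldsymbol{R}},\boldsymbol{\beta})=\det\boldsymbol{M}(\xi,\boldsymbol{\beta})$, which is exactly the asserted invariance of the $D$-criterion.

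There is no genuine analytic difficulty; the statement is a specialisation of the equivariance results of \cite{Radloff:2016,Radloff:2018} to a parameter that is a fixed point of the induced action. The one point deserving care is precisely this passage from equivariance to invariance: one has to check that rotating only $x_2,\ldots,x_k$ fixes the parameter rather than forcing a compensating reparametrisation. This is immediate from~\eqref{eq:speccase}, since $\boldsymbol{\tilde{R}}^\top\boldsymbol{\beta}=\boldsymbol{\beta}$ follows from $\beta_2=\ldots=\beta_k=0$ and $\boldsymbol{Q}^\top\mathbb{O}_{k-1}=\mathbb{O}_{k-1}$; verifying this fixed-point property is the only step requiring attention, and it is also what re-confirms that the intensity values are left invariant.
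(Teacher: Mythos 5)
Your argument is correct and is exactly the standard equivariance-to-invariance reasoning that the paper itself relies on: it gives no explicit proof of this lemma but imports it from \cite{Radloff:2018}, where the same conjugation identity $\boldsymbol{M}(\xi^{\boldsymbol{R}},\boldsymbol{\beta})=\boldsymbol{\tilde{R}}\,\boldsymbol{M}(\xi,\boldsymbol{\beta})\,\boldsymbol{\tilde{R}}^\top$ together with the fixed-point property $\boldsymbol{\tilde{R}}^\top\boldsymbol{\beta}=\boldsymbol{\beta}$ under \eqref{eq:speccase} does the work. You correctly isolate the one point needing care, namely that the intensity $\lambda(\beta_0+\beta_1x_1)$ is untouched because the rotation fixes $x_1$, so no compensating reparametrisation of $\boldsymbol{\beta}$ is required.
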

	So we can find an optimal design within the class of invariant designs on the surface of the ball.

	If the initial guess $(\beta_1,\ldots,\beta_k)^\top$ is $\mathbb{O}_k$ then no rotation $\boldsymbol{\tilde{g}}$ is needed at the beginning and an optimal design is invariant with respect to rotations of all components $x_1,x_2,\ldots,x_k$ because the intensity function $\lambda\!\left(\boldsymbol{f}(\boldsymbol{x})^\top\boldsymbol{\beta}\right)$ is constant in that case. As in the linear model issue the (continuously) uniform design on $\mathbb{S}_{k-1}$ is (locally) $D$-optimal. A $k$-dimensional regular simplex, whose $k+1$ vertices lie on the surface of the design region $\mathbb{S}_{k-1}$, has the same information matrix --- the diagonal matrix $\mathrm{diag}(1,\tfrac{1}{k},\ldots,\tfrac{1}{k})$, see \citet[Section 15.12]{Pukelsheim:1993} or \cite{Radloff:2018}. It can be easily calculated that the vertices of a regular $k$-dimensional cross-polytope ($2\,k$ vertices)  as well as the vertices of a $k$-dimensional cube ($2^k$ vertices) inscribed in the ball $\mathbb{B}_k$ have the same information matrix if equal weights are assigned.   
	
	Note that every design or probability measure on the surface of a unit ball can be split into a marginal probability measure $\xi_1$ on $[-1,1]$ for $x_1$ and a probability kernel given $x_1$. In the case of \eqref{eq:speccase} with $\beta_1>0$ Lemma~\ref{lem4} provides a special property, so that we get the  representations in Lemma~\ref{lem4} for optimal invariant designs, the information matrix and the sensitivity function $$\psi(\boldsymbol{x},\xi_1\otimes\overline{\eta}) = \lambda\!\left(\boldsymbol{f}(\boldsymbol{x})^\top\boldsymbol{\beta}\right)\boldsymbol{f}(\boldsymbol{x})^\top\boldsymbol{M}^{-1}\left(\xi_1\otimes\overline{\eta}\right)\boldsymbol{f}(\boldsymbol{x})$$ which is used in the Kiefer-Wolfowitz Equivalence Theorem for $D$-optimality.

\begin{lemma} \label{lem4} 
	For $\boldsymbol{\beta}$ satisfying~\eqref{eq:speccase} the invariant designs (on the surface) with respect to rotations of $x_2,\ldots,x_k$ are given by $\xi_1\otimes\overline{\eta}$, where $\xi_1$ is a marginal design on $[-1,1]$ and $\overline{\eta}$ is a probability kernel (conditional design). For fixed $x_1$ the kernel $\overline{\eta}(x_1,\cdot)$ is the uniform distribution on the surface of a $(k-1)$-dimensional ball with radius $\sqrt{1-x_1^2}$.\\
	The related information matrix is (remembering $q(x_1)=\lambda(\beta_0+\beta_1 x_1)$)
\renewcommand*\arraystretch{1.2}
	\begin{equation}
			\boldsymbol{M}(\xi_1\otimes\overline{\eta})=
					\left(\begin{array}{c|c}
							\hspace*{-0.5em}\begin{array}{cc}
							\int q\,\D\xi_1 		 & \int q \id \D\xi_1\\
							\int q \id \D\xi_1 & \int q \id^2 \D\xi_1
							\end{array} & \mathbb{O}_{2\times (k-1)}\\ \hline
							\mathbb{O}_{(k-1)\times 2} & \frac{1}{k-1} \int q\,(1-\id^2)\,\D\xi_1\ \mathbb{I}_{k-1}							
					\end{array} \right) .
	\label{eq:infomatrix}
	\end{equation}
\renewcommand*\arraystretch{1}

	\noindent The sensitivity function $\psi$ is invariant (constant on orbits) and has for $\boldsymbol{x}\in\mathbb{S}_{k-1}$ the form
	\begin{equation}
		\psi(\boldsymbol{x},\xi_1\otimes\overline{\eta})=q(x_1)\cdot p_1(x_1) \quad\text{with}\quad \boldsymbol{x}=(x_1,\ldots,x_k)^\top
	\label{eq:5a}
	\end{equation}
	where $p_1$ is a polynomial of degree 2 in $x_1$.	
\end{lemma}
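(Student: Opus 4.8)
The plan is to treat the three assertions in turn, using the product representation as the backbone and deriving the two explicit formulas from it by direct computation.

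First I would establish the factorization. The relevant group consists of the transformations acting on $x_2,\ldots,x_k$ while fixing $x_1$; these preserve the reduced structure~\eqref{eq:speccase} because $\beta_2=\ldots=\beta_k=0$, so the coordinate map $\boldsymbol{x}\mapsto x_1$ is constant on every orbit. For $x_1\in(-1,1)$ the orbit through $\boldsymbol{x}\in\mathbb{S}_{k-1}$ is exactly the fibre $\{\boldsymbol{x}\in\mathbb{S}_{k-1}:x_1\text{ fixed}\}$, i.e.\ the surface of a $(k-1)$-dimensional ball of radius $\sqrt{1-x_1^2}$, on which the group acts transitively. Pushing an invariant design $\xi$ forward along $\boldsymbol{x}\mapsto x_1$ yields the marginal $\xi_1$ on $[-1,1]$; disintegrating $\xi$ over $\xi_1$ produces a kernel $\overline{\eta}(x_1,\cdot)$ supported on the fibre, and invariance of $\xi$ forces, for $\xi_1$-almost every $x_1$, this conditional to be invariant under the transitive action. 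Since the uniform distribution is the unique invariant probability measure on each fibre, $\overline{\eta}(x_1,\cdot)$ must be uniform, which is precisely the asserted form $\xi_1\otimes\overline{\eta}$.

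Next I would compute the information matrix by integrating the rank-one integrand $q(x_1)\,\boldsymbol{f}(\boldsymbol{x})\boldsymbol{f}(\boldsymbol{x})^\top$ first over the kernel and then over $\xi_1$, using that $\lambda(\boldsymbol{f}(\boldsymbol{x})^\top\boldsymbol{\beta})=q(x_1)$ under~\eqref{eq:speccase}. For fixed $x_1$ the uniform distribution on the fibre is symmetric under sign changes and permutations of $x_2,\ldots,x_k$, so all first moments $\int x_i\,\overline{\eta}$ and all mixed second moments $\int x_i x_j\,\overline{\eta}$ with $i\neq j$ (for $i,j\geq 2$) vanish, while by symmetry together with $\sum_{i\geq 2}x_i^2=1-x_1^2$ each $\int x_i^2\,\overline{\eta}$ equals $\tfrac{1-x_1^2}{k-1}$. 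This annihilates every entry of $\boldsymbol{f}\boldsymbol{f}^\top$ coupling the block $\{0,1\}$ to the block $\{2,\ldots,k\}$, as well as every off-diagonal entry within the latter, leaving the $2\times 2$ upper-left block with entries $1,x_1,x_1^2$ and a scalar multiple of $\mathbb{I}_{k-1}$ in the lower-right. Multiplying by $q(x_1)$ and integrating over $\xi_1$ reproduces~\eqref{eq:infomatrix} entrywise. For the sensitivity function I would then exploit the block-diagonal shape: writing the upper-left $2\times 2$ block as $\boldsymbol{A}$ and the lower-right scalar as $c=\tfrac{1}{k-1}\int q\,(1-\id^2)\,\D\xi_1$, regularity of $\boldsymbol{M}$ is equivalent to $\boldsymbol{A}$ invertible and $c\neq 0$, and $\boldsymbol{M}^{-1}$ inherits the same block form. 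Hence $\boldsymbol{f}(\boldsymbol{x})^\top\boldsymbol{M}^{-1}\boldsymbol{f}(\boldsymbol{x})=(1,x_1)\boldsymbol{A}^{-1}(1,x_1)^\top+\tfrac{1}{c}(x_2^2+\ldots+x_k^2)$; evaluating on $\mathbb{S}_{k-1}$, where $x_2^2+\ldots+x_k^2=1-x_1^2$, turns this into a function of $x_1$ alone, visibly a polynomial $p_1$ of degree $2$ (a quadratic form in $(1,x_1)$ plus the quadratic $\tfrac{1-x_1^2}{c}$). Since it depends only on $x_1$ it is constant on orbits, and the factor $q(x_1)$ yields~\eqref{eq:5a}.

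The main obstacle I anticipate is the first step: making the disintegration rigorous and justifying that the conditional measures may be taken invariant for $\xi_1$-almost every $x_1$, together with the degenerate fibres --- the poles $x_1=\pm 1$, where the fibre is a single point and the kernel is a point mass carrying no invariance content, and the low-dimensional case $k=2$, where the fibre is two points and one must invoke the reflection $x_2\mapsto-x_2$ (which also preserves~\eqref{eq:speccase}) to force equal weights. Once the product representation is secured, the matrix and sensitivity computations are routine symmetry and block-algebra arguments.
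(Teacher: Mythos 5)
Your argument is correct and is essentially the standard one: the paper itself gives no proof of Lemma~\ref{lem4} (it is imported from the cited earlier work), but the route you take --- disintegration of an invariant design into the marginal $\xi_1$ and the unique invariant (uniform) conditional on each orbit, the moment computations $\int x_i\,\overline{\eta}=0$, $\int x_ix_j\,\overline{\eta}=0$, $\int x_i^2\,\overline{\eta}=\tfrac{1-x_1^2}{k-1}$, and the block-diagonal inversion restricted to $\mathbb{S}_{k-1}$ --- is exactly the argument underlying the stated formulas. Your attention to the degenerate fibres at $x_1=\pm1$ and to the case $k=2$ (where the group must include the reflection $x_2\mapsto-x_2$) is a welcome extra point of care.
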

	If $x_1\in\{-1,1\}$, the $(k-1)$-dimensional ball with the uniform distribution is degenerated as a point. So it is only a one-point-measure.

\setcounter{equation}{0} 
\section{Logit and probit model}
\label{sec:3}	
	The intensity function for the logit model is
	\[\lambda_{\mathrm{logit}}(x)=\frac{\exp(x)}{(1+\exp(x))^2}\]
	and for the probit model
	\[\lambda_{\mathrm{probit}}(x)=\frac{\phi^2(x)}{\Phi(x)(1-\Phi(x))}\]
	with the density function $\phi$ and cumulative distribution function $\Phi$ of the standard normal distribution.
	
	As mentioned before the intensity function of the binary response models with logit or probit link do not satisfy the conditions (A2) and (A3). But they satisfy 
	\begin{enumerate}[leftmargin=1cm]
		\item[(A2$^\prime$)] $\lambda$ is unimodal with mode $c_\lambda\in\mathbb{R}$, which means that there exists a $c_\lambda\in\mathbb{R}$ so that $\lambda^\prime$ is positive on $(-\infty,c_\lambda)$ and negative on $(c_\lambda,\infty)$.
		\item[(A3$^\prime$)] There exists a $c_\lambda\in\mathbb{R}$ so that the second derivative $u^{\prime\prime}$ of $u=\frac{1}{\lambda}$ is both injective on $(-\infty,c_\lambda]$ and injective on $[c_\lambda,\infty)$.
	\end{enumerate}
	If (A2$^\prime$) and (A3$^\prime$) are fulfilled it should be the same $c_\lambda$.
	As the properties (A1)-(A4) transfer from the intensity function $\lambda$ to the abbreviated form $q$ for $\beta_1>0$ and vice versa, the same is to (A2$^\prime$) and (A3$^\prime$) --- analogously $c_q=\frac{c_\lambda-\beta_0}{\beta_1}$.
	
	It is 
	\begin{align*}
		q_\mathrm{logit}^\prime(x_1)&=\beta_1\,\frac{\exp(\beta_0+\beta_1 x_1)\,(1-\exp(\beta_0+\beta_1 x_1))}{(1+\exp(\beta_0+\beta_1 x_1))^3}\\
		u_\mathrm{logit}(x_1)
				&=2+\exp(\beta_0+\beta_1 x_1)+\exp(-(\beta_0+\beta_1 x_1))\\
		u_\mathrm{logit}^{\prime\prime}(x_1)
				&=\beta_1^2\,\left(\exp(\beta_0+\beta_1 x_1)+\exp(-(\beta_0+\beta_1 x_1))\right)
	\end{align*}
	in logit model. Without writing down the terms of the probit model here we have in both models $c_\lambda=0$ for $\lambda$ and the analogue $c_q=-\frac{\beta_0}{\beta_1}$ for $q$.
	
	We introduce a fifth property.
	\begin{enumerate}[leftmargin=1cm]
		\item[(A5)] $u=\frac{1}{\lambda}$ dominates $x^2$ asymptotically for $x\to\infty$, 
								which means \[\lim\limits_{x\to\infty}\left|\frac{u(x)}{x^2}\right|=\infty.\]
	\end{enumerate}
	In other words $u(x)=\frac{1}{\lambda(x)}$ goes faster to ($\pm$) infinity than $x^2$ for $x\to\infty$.
	The logit and probit models satisfy (A5).

	\begin{lemma} \label{L6} 
		In~\eqref{eq:speccase}: If $q$ satisfies (A1), (A2\,$^\prime$) and (A3\,$^\prime$), then the (locally) $D$-optimal marginal design $\xi_1^\ast$ is concentrated on exactly 2 points $x_{11}^\ast, x_{12}^\ast\in[-1,1]$ or exactly 3 points $x_{11}^\ast=1$, $x_{12}^\ast\in(-1,1)$ and $x_{13}^\ast=-1$.\\
		If $q$ satisfies additionally (A5) then only the 2-point structure is possible.
	\end{lemma}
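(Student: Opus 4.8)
The plan is to read off the support of $\xi_1^\ast$ from the Kiefer--Wolfowitz Equivalence Theorem, exploiting the explicit shape of the sensitivity function in Lemma~\ref{lem4}: on the sphere $\psi(\boldsymbol x,\xi_1\otimes\overline\eta)=q(x_1)\,p_1(x_1)$ with $p_1=\boldsymbol f^\top\boldsymbol M^{-1}\boldsymbol f$ a polynomial of degree $2$ that is strictly positive. For an optimal design one has $\psi\le k+1$ everywhere with equality exactly on the support; dividing $q\,p_1\le k+1$ by $q>0$ (using (A1)) shows that the $x_1$-coordinates of the support are precisely the zeros of
\[
r:=(k+1)\,\tfrac{1}{q}-p_1\ge0 \qquad\text{on }[-1,1].
\]
Each support point in the open interval $(-1,1)$ is an interior minimum of $r$ with value $0$, hence a zero of $r$ of multiplicity at least two, whereas the two endpoints $\pm1$ may contribute simple zeros.

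The decisive step is to bound the number of zeros of $r$ through $r''$. Since $p_1$ has degree $2$, $p_1''$ is a constant, so $r''=(k+1)\,(1/q)''-p_1''$ is obtained from $(1/q)''$ by a positive scaling and an additive shift. By (A2$^\prime$)--(A3$^\prime$) the function $(1/q)''$ is strictly monotone on $(-\infty,c_q]$ and on $[c_q,\infty)$, and this property is inherited by $r''$ regardless of the sign of $p_1''$; hence $r''$ has at most one zero on each side of $c_q$, i.e.\ at most two in $(-1,1)$. A generalized Rolle argument then bounds the number of zeros of $r$ in $[-1,1]$, counted with multiplicity, by $2+2=4$. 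Writing $s$ for the number of interior and $b\le2$ for the number of boundary support points, the multiplicity bookkeeping gives $2s+b\le4$; since a regular information matrix needs at least two distinct $x_1$-values (a one-point marginal makes the leading $2\times2$ block of \eqref{eq:infomatrix} singular), we also have $s+b\ge2$. The only configurations compatible with $2s+b\le4$, $b\le2$ and $s+b\ge2$ are $s+b=2$ and $(s,b)=(1,2)$, the latter being exactly the three-point pattern $x_{11}^\ast=1$, $x_{12}^\ast\in(-1,1)$, $x_{13}^\ast=-1$. This establishes the first claim.

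To prove the second claim I rule out $(s,b)=(1,2)$ under (A5). In that configuration $r$ vanishes simply at both endpoints and tangentially at the interior point, so it has a strictly positive hump on each of $(-1,x_{12}^\ast)$ and $(x_{12}^\ast,1)$; consequently $r'$ has three zeros and the two resulting zeros of $r''$ must realize the sign pattern $-,+,-$. Together with the one-sided monotonicity of $r''$ this forces $r''$ to attain an interior maximum at $c_q$ and to be strictly decreasing on $[c_q,\infty)$, whence the same holds for $(1/q)''$. But if $(1/q)''$ were bounded above on $[c_q,\infty)$, as strict decrease would imply, two integrations would give $1/q(x)\le\tfrac12\,(1/q)''(c_q)\,x^2+O(x)$, contradicting the super-quadratic growth of $1/q$ demanded by (A5). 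Hence $(1/q)''$ is in fact strictly increasing on $[c_q,\infty)\supseteq[c_q,1]$, which is incompatible with the required decrease, so the three-point pattern cannot occur and only the two-point structure remains.

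I expect the last paragraph to be the main obstacle: turning the purely asymptotic condition (A5) into the local direction of monotonicity of $(1/q)''$ on $[c_q,1]$ and matching it against the sign pattern the double hump imposes. Care is also needed with the position of $c_q$ relative to $[-1,1]$ --- when $c_q\notin(-1,1)$ the function $r''$ is monotone on the whole interval and has at most one zero, which already excludes three support points --- and with verifying that the reduction from $(1/q)''$ to $r''$ by an additive constant really preserves the ``at most one zero per side'' property used throughout.
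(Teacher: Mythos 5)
Your proof is correct, and its first half coincides with the paper's argument: both reduce the problem via the Kiefer--Wolfowitz theorem to the gap function $v=\tfrac{p_1}{k+1}-\tfrac1q$ (your $r=-(k+1)v$), observe that $v''=\tilde c-(1/q)''$ has at most two roots by (A3$^\prime$), and deduce the two- or three-point structure --- you by counting zeros of $r$ with multiplicity plus a generalized Rolle argument, the paper by tracking the alternating pattern of at most three local extrema of $v$; this is the same bookkeeping in different clothes. (Minor point: Kiefer--Wolfowitz only gives that the support is \emph{contained} in the zero set of $r$, not that it equals it, but your argument only uses the inclusion.) Where you genuinely diverge is the exclusion of the three-point pattern under (A5). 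The paper's argument is a one-liner at the level of $v$ itself: in the minimum--maximum--minimum configuration $v'$ has no zeros to the right of the last interior minimum, so $v$ is increasing there and cannot tend to $-\infty$, whereas (A5) forces $\lim_{x_1\to\infty}v(x_1)=-\infty$ because $1/q$ outgrows the quadratic $p_1$. You instead work one derivative lower: the sign pattern $-,+,-$ of $r''$ combined with its piecewise strict monotonicity forces $(1/q)''$ to be strictly decreasing on $[c_q,\infty)$, and two integrations then bound $1/q$ by a quadratic, contradicting (A5). Both are valid and rest on the same comparison of $1/q$ with a quadratic; your route is more laborious, but it has the incidental benefit of locating $c_q$ strictly between the two zeros of $r''$ (hence inside $(-1,1)$) in the three-point configuration, and your closing observation that $c_q\notin(-1,1)$ already excludes three support points is correct and left implicit in the paper.
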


	\begin{proof}
		This proof is based on the proof of Lemma~1 in \cite{Konstantinou:2014}.
		By the Kiefer-Wolfowitz Equivalence Theorem for $D$-optimality we have to check 
		\[k+1\geq\psi(\boldsymbol{x},\xi_1\otimes\overline{\eta})=q(x_1)\cdot p_1(x_1)\quad\text{for all}\quad\boldsymbol{x}=(x_1,\ldots,x_k)^\top .\]
		This is equivalent to 
		\begin{equation}
			\frac{p_1(x_1)}{k+1} - \frac{1}{q(x_1)} \leq 0\ .
		\label{eq:IneqEquivalenceThBiedermann}
		\end{equation}
		With equality in the support points of the optimal design.\\
		Assume, that $\xi_1$ has only 1 support point. So the determinant of the first block of the information matrix $\boldsymbol{M}(\xi_1\otimes\overline{\eta})$ in Lemma~\ref{lem4} would be 0 and the inverse of the information matrix and thus the polynomial $p_1$ would not exist. Contradiction. Hence, $\xi_1$ has at least 2~support points.\\
		Let us call the left-hand side of \eqref{eq:IneqEquivalenceThBiedermann} $v(x_1)$. The second derivative of $v$ is $v^{\prime\prime}(x_1)=\tilde{c}-\left(\frac{1}{q(x_1)}\right)^{\prime\prime}$ where $\tilde{c}$ is the constant remaining from the polynomial $\frac{p_1(x_1)}{k+1}$ of degree~2 (see Lemma~\ref{lem4}). The condition (A3$^\prime$) says that $v^{\prime\prime}$ can have at most 2~roots. Because of differentiability and continuity the first derivative of  $v$ has at most 3~roots which means that $v$ has at most 3~potential inner local extreme points with alternating minima and maxima. If it is minimum-maximum-minimum then $x_{11}^\ast=1$, $x_{12}^\ast\in(-1,1)$ and $x_{13}^\ast=-1$ can be the 3~maxima of $v$ since 1 and $-1$ are boundary points. If additionaly (A5) is satisfied, $\lim_{x_1\to\infty}v(x_1)=-\infty$ so that $1$ cannot be a boundary maximum if the other 3~local extreme points are less than 1. In the case of (A5) the only situation with exactly 3 inner extreme points is maximum-minimum-maximum. In all other cases there are at most 2~maxima (inner or boundary) and so at most 2~support points.
	\end{proof}

The next lemma characterizes the support points when the design has exactly 2.

\begin{theorem}\label{T2}
	In the settings of Lemma~\ref{L6} and with~$q$ satisfying (A5) the (locally) $D$-optimal marginal design~$\xi_1^\ast$ has exactly 2~support points $x_{11}^\ast$ and $x_{12}^\ast$ with $x_{11}^\ast>x_{12}^\ast$ and weights $w_1:=\xi_1^\ast(x_{11}^\ast)$ and $w_2:=\xi_1^\ast(x_{12}^\ast)$.\\
		There are 3~cases:
	\begin{enumerate}
		\item[a)] $\frac{c_\lambda-\beta_0}{\beta_1}  > 1$:     	$x_{11}^\ast=1, \quad w_1=\frac{1}{k+1}, \quad w_2=\frac{k}{k+1}$\\
																									For $k\geq 2\ :\ x_{12}^\ast\in(-1,1)$ is solution of 
																									\begin{equation*}
																										\frac{q^\prime(x_{12}^\ast)}{q(x_{12}^\ast)}=\frac{2\,(1+kx_{12}^\ast)}{k\,(1-x_{12}^{\ast\ 2})}
																									\end{equation*}
																									and for $k=1$\ :\ If $x$ is solution of
																									\begin{equation*}
																										\frac{q^\prime(x)}{q(x)}=\frac{2}{1-x}
																									\end{equation*}
																									and $x\in[-1,1)$ then $x_{12}^\ast=x$. Otherwise $x_{12}^\ast=-1$.\\																							
		
																									In any case, if additionally (A4) is satisfied, the solution $x_{12}^\ast$ is unique.\\
		\item[b)] $\frac{c_\lambda-\beta_0}{\beta_1}  < -1$:      $x_{12}^\ast=-1, \quad w_1=\frac{k}{k+1}, \quad w_2=\frac{1}{k+1}$\\
																									For $k\geq 2\ :\ x_{11}^\ast\in(-1,1)$ is solution of 
																									\begin{equation*}
																										\frac{q^\prime(x_{11}^\ast)}{q(x_{11}^\ast)}=\frac{2\,(-1+kx_{11}^\ast)}{k\,(1-x_{11}^{\ast\ 2})}
																									\end{equation*}
																									and for $k=1$\ :\ If $x$ is solution of
																									\begin{equation*}
																										\frac{q^\prime(x)}{q(x)}=\frac{-2}{1+x}
																									\end{equation*}
																									and $x\in(-1,1]$ then $x_{11}^\ast=x$. Otherwise $x_{11}^\ast=1$.\\		
																									
																									In any case, if additionally (A4) is satisfied, the solution $x_{11}^\ast$ is unique.\\ 
		\item[c)] $\frac{c_\lambda-\beta_0}{\beta_1}\in [-1,1]$: If $x,y\in(-1,1)$ with $x>y$ and $\alpha\in(0,1)$ is a solution of the equation system
				\begin{align*} 		
					\frac{q^\prime(x)}{q(x)}+\frac{2}{x-y}+(k-1)\,\frac{q^\prime(x)\,(1-x^2)\,\alpha + q(x)\,(-2\,x)\,\alpha}{q(x)\,(1-x^2)\,\alpha+q(y)\,(1-y^2)\,(1-\alpha)}&=0\\
					\frac{q^\prime(y)}{q(y)}-\frac{2}{x-y}+(k-1)\,\frac{q^\prime(y)\,(1-y^2)\,(1-\alpha) + q(y)\,(-2\,y)\,(1-\alpha)}{q(x)\,(1-x^2)\,\alpha+q(y)\,(1-y^2)\,(1-\alpha)}&=0\\
					\frac{1}{\alpha}-\frac{1}{1-\alpha}+(k-1)\,\frac{q(x)\,(1-x^2) - q(y)\,(1-y^2)}{q(x)\,(1-x^2)\,\alpha+q(y)\,(1-y^2)\,(1-\alpha)}&=0
				\label{eq:}
				\end{align*}		
				then the 2~support points are $x_{11}^\ast=x$, $x_{12}^\ast=y$ with weights $w_1=\alpha$ and $w_2=1-\alpha$. Otherwise the solution is in the form of the first two cases.		
	\end{enumerate}
\end{theorem}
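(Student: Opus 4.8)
The plan is to turn the statement into a finite-dimensional optimisation over two support points and one weight, and then to let the three cases be decided by the position of the mode $c_q=\frac{c_\lambda-\beta_0}{\beta_1}$ relative to $[-1,1]$. By Lemma~\ref{L6} (using (A5)) the optimal marginal design is carried by exactly two points $x>y$ in $[-1,1]$ with weights $\alpha$ and $1-\alpha$. Inserting this into the block form of Lemma~\ref{lem4} and factorising the determinant of the block-diagonal information matrix gives
\[
	\det\boldsymbol{M}=(k-1)^{-(k-1)}\,\alpha(1-\alpha)\,q(x)q(y)(x-y)^2\bigl[\alpha q(x)(1-x^2)+(1-\alpha)q(y)(1-y^2)\bigr]^{k-1},
\]
where the leading $2\times2$ minor contributes $\alpha(1-\alpha)q(x)q(y)(x-y)^2$ after a short computation; for $k=1$ the bracketed factor is absent and $\boldsymbol{M}$ is just that $2\times2$ block.

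Next I would maximise $\log\det\boldsymbol{M}$ over $\{-1\leq y<x\leq1,\ 0<\alpha<1\}$. Equating the partial derivatives $\partial_x,\partial_y,\partial_\alpha$ of $\log\det\boldsymbol{M}$ to zero produces exactly the three equations of case~c); this differentiation is routine and already establishes c) whenever the stationary point is interior. It then remains to decide, from $c_q$, whether the maximiser lies in the interior or on the boundary $x=1$ or $y=-1$ of the feasible region (the face $x=y$ being excluded, since the determinant vanishes there).

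The decisive observation is that the design region confines $x_1$ to $[-1,1]$, so only the restriction $q|_{[-1,1]}$ enters the optimisation. If $c_q>1$ then $q'=\beta_1\lambda'(\beta_0+\beta_1\,\cdot\,)>0$ throughout $[-1,1]$, so on this interval $q$ satisfies precisely the monotonicity and injectivity hypotheses (A1)--(A3) behind Theorem~\ref{Theorem1}; its boundary conclusion therefore transfers verbatim and forces $x_{11}^\ast=1$. Substituting $x=1$ collapses the $\partial_\alpha$-equation to $\alpha=\frac{1}{k+1}$ and the $\partial_y$-equation to $\frac{q'(y)}{q(y)}=\frac{2(1+ky)}{k(1-y^2)}$, which is case~a); case~b) ($c_q<-1$, $q$ strictly decreasing on $[-1,1]$) is the mirror image with $x_{12}^\ast=-1$ and $\alpha=\frac{k}{k+1}$. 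For $c_q\in[-1,1]$ the mode sits inside the interval, $q$ is not monotone there, and the interior system of c) governs, the boundary forms of a)/b) appearing only when that system has no admissible root. Uniqueness under (A4) follows because $\frac{q'}{q}=\beta_1\frac{\lambda'}{\lambda}(\beta_0+\beta_1\,\cdot\,)$ is non-increasing while the right-hand sides above are strictly increasing; for $k\geq2$ the right-hand side of a) sweeps from $-\infty$ (as $y\to-1^+$) to $+\infty$ (as $y\to1^-$), so the intermediate value theorem yields a unique interior root, whereas for $k=1$ the bracketed factor is absent and the right-hand side $\frac{2}{1-y}$ stays $\geq1$ at $y=-1$, giving an interior root only when one lies in $[-1,1)$ and otherwise pushing the second point to $-1$ --- exactly the stated $k=1$ dichotomy.

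The step I expect to be the main obstacle is the boundary analysis of the previous paragraph: rigorously showing that a monotone $q|_{[-1,1]}$ forces a support point to the matching endpoint. The clean route is the reduction to Theorem~\ref{Theorem1}, but one must verify carefully that nothing outside $[-1,1]$ is used --- that (A2$^\prime$)/(A3$^\prime$) with $c_q\notin[-1,1]$ reproduce on $[-1,1]$ exactly the local behaviour that (A2)/(A3) would guarantee --- so that the earlier theorem genuinely applies without the globally stronger assumptions.
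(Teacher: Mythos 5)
Your proposal is correct and follows essentially the same route as the paper: Lemma~\ref{L6} reduces the problem to two support points, cases a) and b) are handled by restricting to $[-1,1]$ where $q$ is monotone and invoking Theorem~\ref{Theorem1} (with a reflection for b)), and case c) comes from maximising the same log-determinant expression, whose stationarity conditions are exactly the three equations stated. The only difference is that you carry out the differentiation and the collapse at $x=1$ explicitly and flag the need to check that Theorem~\ref{Theorem1} uses only the behaviour of $q$ on $[-1,1]$ --- a point the paper's own proof passes over silently.
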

	
\begin{proof}
	In \textit{a)} for all $x_1\in[-1,1]$ (A1), (A2) and (A3) are satisfied. And this is the situation of Theorem~\ref{Theorem1}.\\
	In \textit{b)} for all $x_1\in[-1,1]$ (A1) and (A3) are satisfied, but $\lambda$ or $q$, respectively, are strictly decreasing. Using the reflection $x_1\mapsto -x_1$ (A2) is also on hand. Equivariance yields that the optimal design of Theorem~\ref{Theorem1} has to be reflected, too.\\
	According to \cite{Radloff:2018} we know the logarithmized determinant of the information matrix $\boldsymbol{M}(\xi_1\otimes\overline{\eta})$ with a 2-point marginal design
	\begin{align*} 
		&\log q(x_{11}^\ast) + \log q(x_{12}^\ast) + \log (x_{11}^\ast-x_{12}^\ast)^2 + \log \alpha + \log(1-\alpha)\\ &+ (k-1)\left[-\log(k-1)+\log\left(q(x_{11}^\ast)\,(1-x_{11}^{\ast\ 2})\,\alpha+q(x_{12}^\ast)\,(1-x_{12}^{\ast\ 2})\,(1-\alpha)\right)\right]
	\end{align*} 		
	which has to be maximized in \textit{c)}. If $x_{11}^\ast, x_{12}^\ast\notin(-1,1)$ and $\alpha\notin(0,1)$ then there must be a boundary maximum. If one point is fixed to 1 or $-1$ we get the same situation as in \textit{a)} or \textit{b)}, respectively.
\end{proof}

	\begin{figure} [htb!]
		\raisebox{0.42\textwidth}{\parbox[t][][t]{4em}{$w_1$\\[7.7ex]$w_2$\\[0.7ex]$x_{11}^\ast$\\[18.5ex]$x_{12}^\ast$}}\hspace*{-4em}
		\subfigure[$k=3$]{\includegraphics[width=0.475\textwidth]{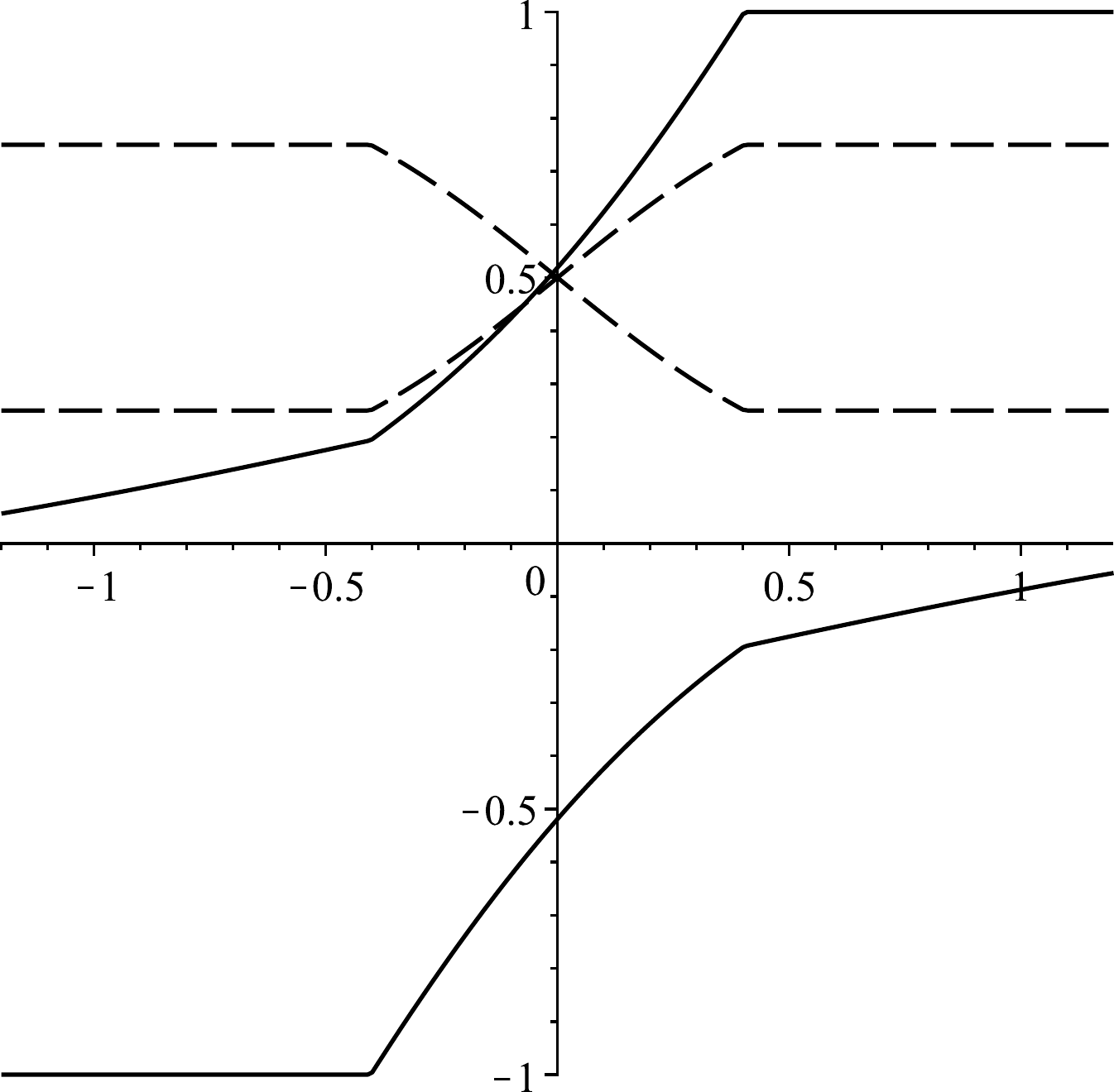}}
		\hspace*{-3em}\raisebox{0.17\textwidth}{$-\beta_0$}
    \hspace*{0.025\textwidth}
		\raisebox{0.44\textwidth}{\parbox[t][][t]{4em}{$w_1$\\[11.5ex]$w_2$\\[5ex]$x_{11}^\ast$\\[12ex]$x_{12}^\ast$}}\hspace*{-4em}
    \subfigure[$k=6$]{\includegraphics[width=0.475\textwidth]{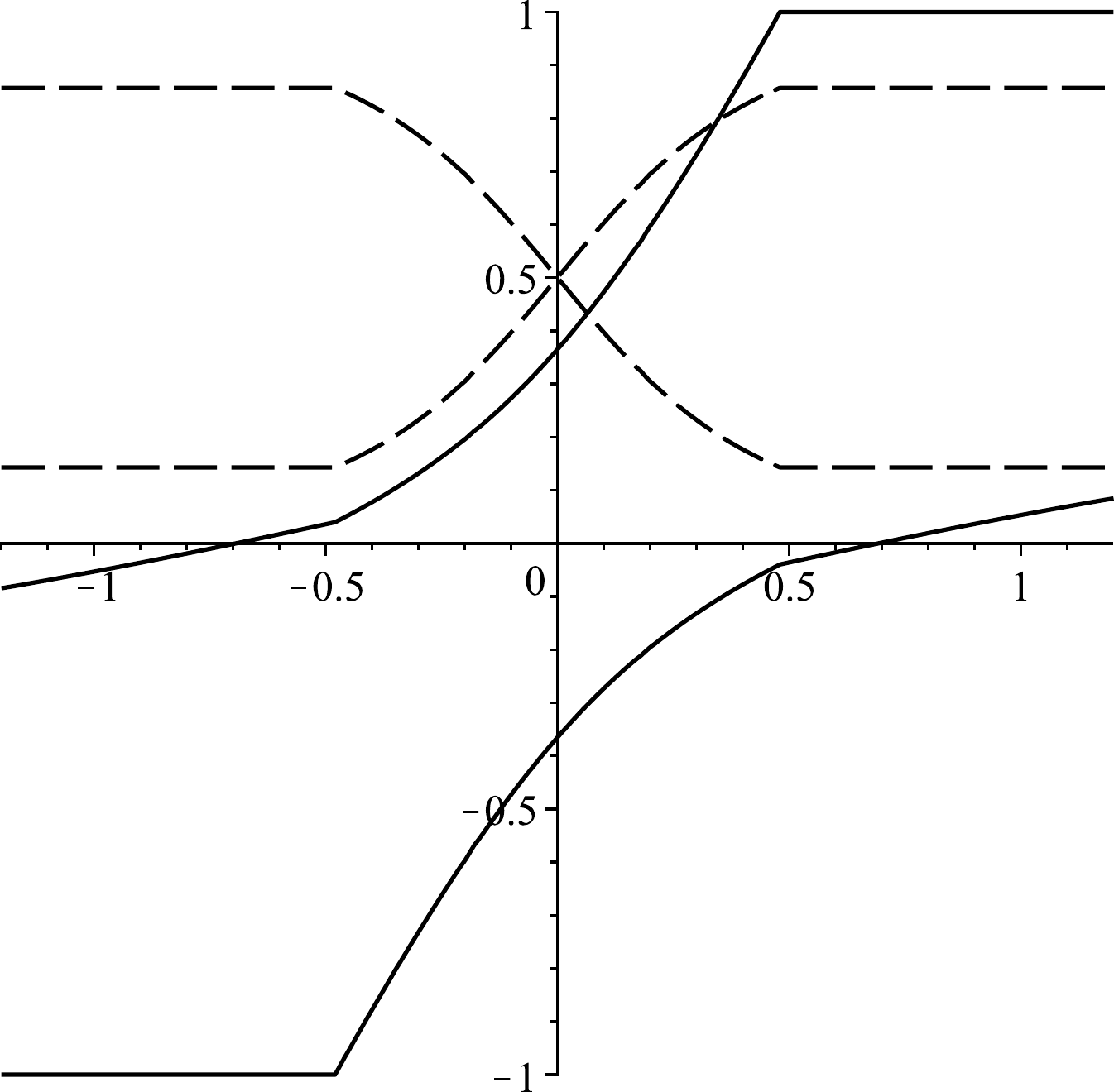}}
		\hspace*{-3em}\raisebox{0.17\textwidth}{$-\beta_0$}
\caption{Logit model: Dependence of $x_{11}^\ast$ and $x_{12}^\ast$ (solid lines) and the corresponding weights $w_1$ and $w_2=1-w_1$ (dashed lines) on $-\beta_0\in[-1.2,1.2]$. The plots are for fixed dimension~$k$ and $\beta_1=1$. Hence, $-\beta_0=-\frac{\beta_0}{\beta_1}=c_q$.}
	\label{fig:plot_logit_k3_k6}
\end{figure}
	
	Using Theorem~\ref{T2} we can evaluate the two support points of the marginal design $\xi_1$ of the logit model. In Figure~\ref{fig:plot_logit_k3_k6} we did this numerically for $\beta_0\in[-1.2,1.2]$, fixed $\beta_1=1$ and the dimensions $k=3$ or $k=6$. The situation \textit{c)} where we get two real inner points is only for $\beta_0\in(-0.403,0.403)$ (approximated) for $k=3$ and $\beta_0\in(-0.480,0.480)$ for $k=6$. In the probit model the plots in Figure~\ref{fig:plot_probit_k3_k6} have nearly the same structure. The interesting part, where we have two inner points, here is in $(-0.436,0.436)$ for $k=3$ and in $(-0.507,0.507)$ for $k=6$. 

	\begin{figure} [htb!]
		\raisebox{0.42\textwidth}{\parbox[t][][t]{4em}{$w_1$\\[7.7ex]$w_2$\\[7.5ex]$x_{11}^\ast$\\[11.5ex]$x_{12}^\ast$}}\hspace*{-4em}
		\subfigure[$k=3$]{\includegraphics[width=0.475\textwidth]{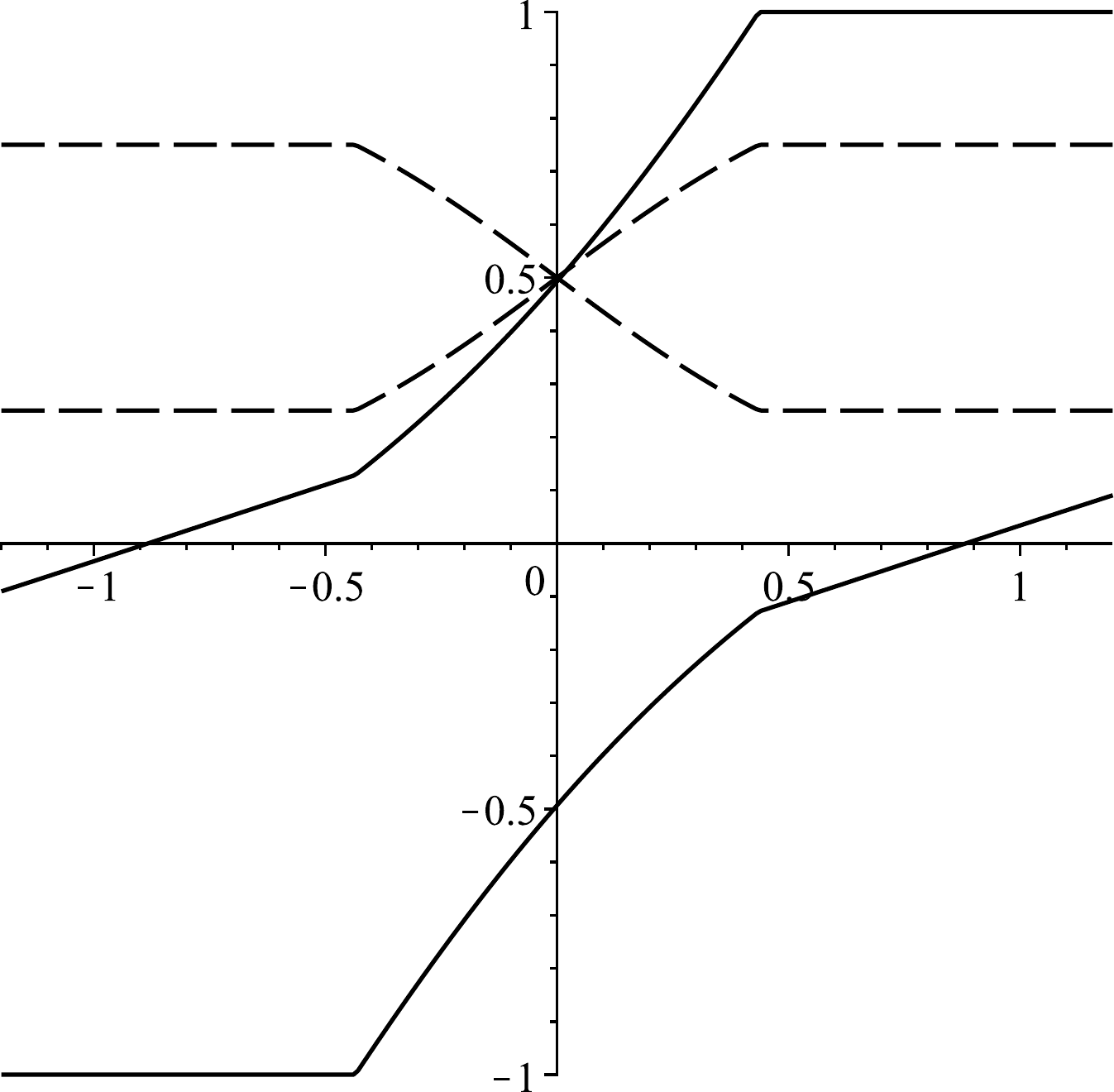}}
		\hspace*{-3em}\raisebox{0.17\textwidth}{$-\beta_0$}
    \hspace*{0.025\textwidth}
		\raisebox{0.44\textwidth}{\parbox[t][][t]{4em}{$w_1$\\[11.5ex]$w_2$\\[8ex]$x_{11}^\ast$\\[9ex]$x_{12}^\ast$}}\hspace*{-4em}
    \subfigure[$k=6$]{\includegraphics[width=0.475\textwidth]{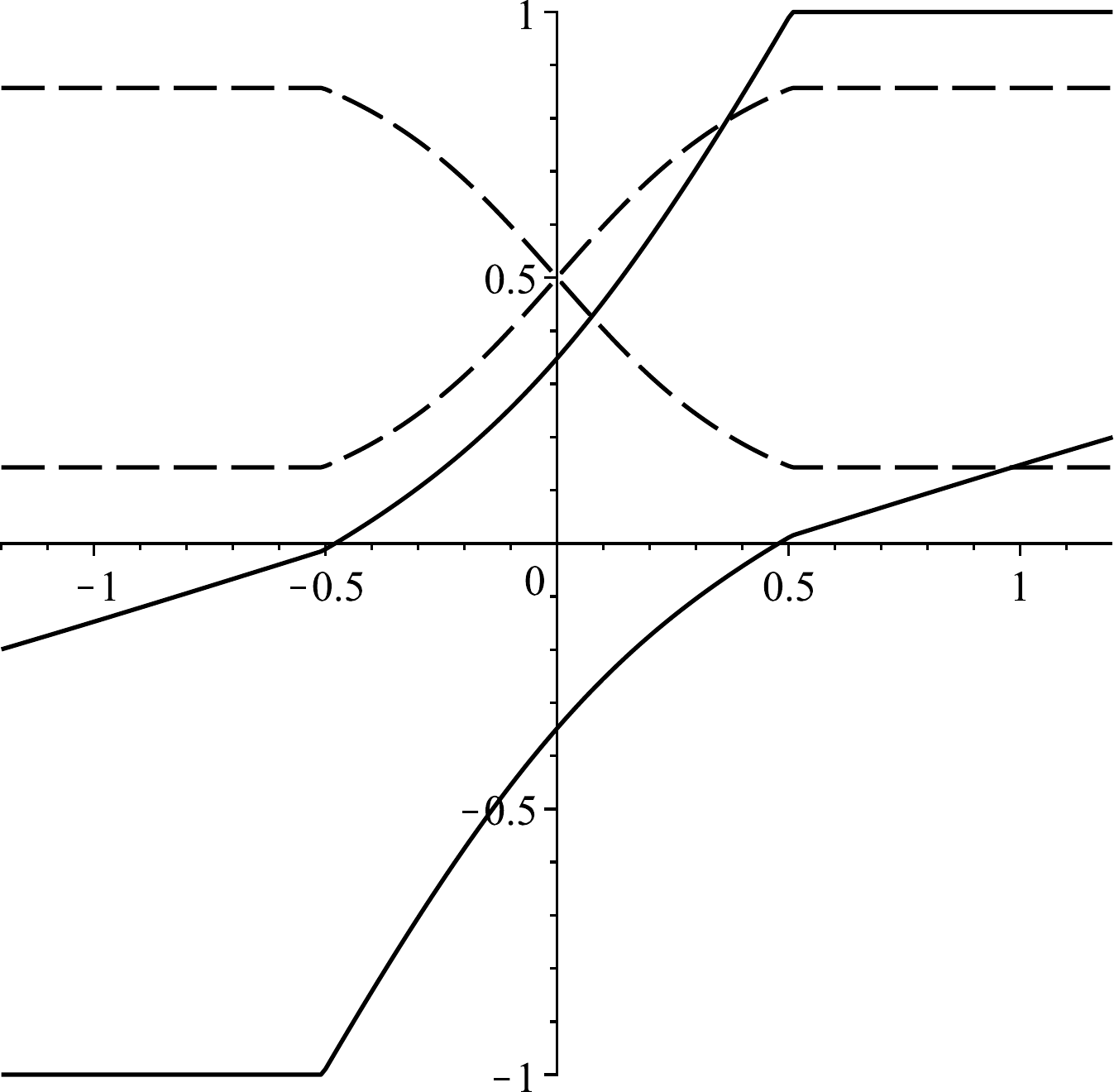}}
		\hspace*{-3em}\raisebox{0.17\textwidth}{$-\beta_0$}
\caption{Probit model: Dependence of $x_{11}^\ast$ and $x_{12}^\ast$ (solid lines) and the corresponding weights $w_1$ and $w_2=1-w_1$ (dashed lines) on $-\beta_0\in[-1.2,1.2]$. The plots are for fixed dimension~$k$ and $\beta_1=1$. Hence, $-\beta_0=-\frac{\beta_0}{\beta_1}=c_q$.}
	\label{fig:plot_probit_k3_k6}
\end{figure}

	But there is a big difference which cannot be seen in the two figures. The behaviour of the inner point for $-\beta_0\to\infty$ or $-\beta_0\to-\infty$ and arbitrary $\beta_1\geq 0$. In the probit model the inner point converges from below to 1 or from above to $-1$, respectively. In the logit model the inner point converges to  
	\begin{equation*}
		\begin{cases}
			\frac{-1+\sqrt{1-\frac{2}{k}\beta_1+\beta_1^2}}{\beta_1} & \text{for $\beta_1>0$}\\
			-\frac{1}{k}& \text{for $\beta_1=0$\ .}
		\end{cases} \notag
	\end{equation*}
	For $\beta_1=1$ we get $-1+\sqrt{\frac{4}{3}}\approx0.1547$ ($k=3$) and $-1+\sqrt{\frac{5}{3}}\approx0.2910$ ($k=6$).

	As in Theorem~\ref{Theorem1} the orbit belonging to the inner point $x_{12}^\ast$ or $x_{11}^\ast$ in situation \textit{a)} or \textit{b)} can be discretized by the vertices of a $(k-1)$-dimensional regular simplex. So we have exact (locally) $D$-optimal designs with equal weights $\frac{1}{k+1}$.\\
	The discretization in \textit{c)} is more difficult. If the weight $w_1$ and $w_2$ are appropriated it can be done as mentioned above by using $(k-1)$-dimensional regular simplices, cross-polytopes, cubes or combinations of them.

	\begin{figure} [htb!]
		\subfigure[$\beta_0=0$]{\includegraphics[width=0.3\textwidth]{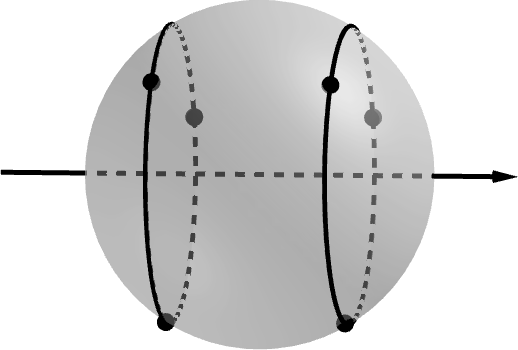}}
    \hspace*{0.02\textwidth}
    \subfigure[$\beta_0=-0.5$]{\includegraphics[width=0.3\textwidth]{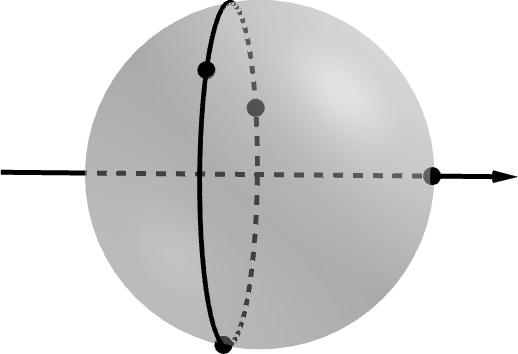}}
    \hspace*{0.02\textwidth}
		\subfigure[$\beta_0=0.1$]{\includegraphics[width=0.3\textwidth]{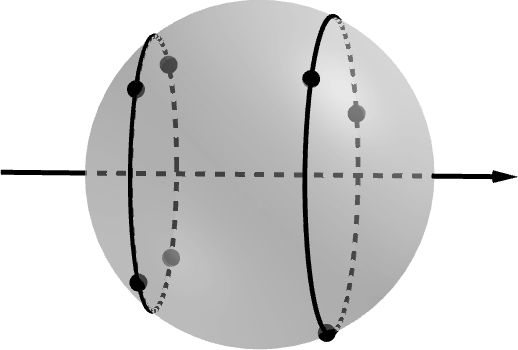}}
\caption{Logit model: Discretized (locally) $D$-optimal designs for $\beta_1=1$ and $k=3$.}
	\label{fig:logit_k3_beta0_0_-0-5}
\end{figure}

As three examples we want to focus the logit model with $\beta_1=1$ and $k=3$, see Figure~\ref{fig:logit_k3_beta0_0_-0-5}.
For $\beta_0=-0.5$ we get $x_{11}^\ast=1$, $x_{12}^\ast\approx-0.18$ and $w_1=\xi_1^\ast(x_{11}^\ast)=\frac{1}{4}$.
For $\beta_0=0$ we get apart from rotation invariance with respect to $x_2,\ldots,x_k$ an extra invariance --- the reflection in $x_1$-direction. In addition the intensity function of the logit model is symmetrical. Therefore the two support points of the marginal design must be symmetrical around 0, that is $x_{11}^\ast=-x_{12}^\ast$, and the weights must be equal $\xi_1^\ast(x_{11}^\ast)=\xi_1^\ast(x_{12}^\ast)=0.5$. By calculation we get $x_{11}^\ast=-x_{12}^\ast\approx0.52$.  So both designs have equal weights on their support points. While the optimal design for $\beta_0=-0.5$ has the minimum number of points the optimal design for $\beta_0=0$ consists of two 2-dimensional simplices. So it may be possible that there is another optimal design with less than 6 support points.\\
In case of $\beta_0=0.1$ we get $x_{11}^\ast\approx0.42$, $x_{12}^\ast\approx-0.62$ and $\xi_1^\ast(x_{11}^\ast)\approx0.4297\approx\frac{3}{7}$. So we decided to substitute one orbit by the vertices of a 2-dimensional simplex (3~points) and one by the vertices of a 2-dimensional cube or cross polytope, which is in two dimensions always a square (4~points). Okay there is a little bit rounding, but it is near to the optimum. To verify this we can calculate the $D$-efficiency which compares the rounded design $\xi_\approx$ and the (non-rounded) optimal design $\xi^\ast$:
\[\mathrm{eff}_D(\xi_\approx):=\left(\frac{\det\boldsymbol{M}(\xi_\approx)}{\det\boldsymbol{M}(\xi^\ast)}\right)^\frac{1}{k}\approx0.999676\ .\]

\setcounter{equation}{0} 
\section{Summary and Discussion}
\label{sec:4}	

In the present paper we developed (locally) $D$-optimal designs for a class of non-linear multiple regression problems which include especially binary response models with logit or probit link. This extension of the results established in \cite{Radloff:2018} provides in certain cases exact designs. In all other cases rotation-invariant approximate designs are obtained which consist of two parallel (non-degenerate) orbits on the surface of the spherical design region of a $k$-dimensional ball.

For practical applications one may imagine problems in engineering or physics where the validity of a model may be assumed on a spherical region around a target value, for example in the framework of response surface methodology.


By using linear transformations, like scaling and rotating, the class of shapes of the design region can be extended from the unit ball to $k$-dimensional balls with arbitrary radius or any $k$-dimensional ellipsoid, which can be obtained by using the equivariance results established in \cite{Radloff:2016}.
 
Here we focused on linear regressors of the multiple linear regression type. Accounting for interactions or quadratic terms will presumably induce additional support points in the interior of the design region and/or more complicated design structures.

There is one property observed in the numerical calculations for both the logit and probit model (see Figures~\ref{fig:plot_logit_k3_k6} and~\ref{fig:plot_probit_k3_k6}) which deserves further investigations: If the intensity function $\lambda$ is symmetrical, that means $\lambda(c_\lambda+x)=\lambda(c_\lambda-x)$, then the two support points are also symmetric around $c_\lambda$ as long as these support points are in the interior of the marginal design region. We observed this in the case of logit and probit models, see Figures~\ref{fig:plot_logit_k3_k6} and~\ref{fig:plot_probit_k3_k6}. For the one-dimensional case this has been proved in \citet[Section 6.5 and 6.6]{Ford:1992}, but this proof cannot be extended to higher dimensions directly because of the additional asymmetric term $(1-x_1^2)$.

As in \cite{Radloff:2018} we only considered the criterion of (local) $D$-optimality which depends on the actual value of the parameter vector. In general other optimality criteria or especially (more) robust criteria, like maxi\-min efficiency or weighted criteria, should be the object of future research also in the present context.


\bibhang=1.7pc
\bibsep=2pt
\fontsize{9}{14pt plus.8pt minus .6pt}\selectfont
\renewcommand\bibname{\large \bf References}
\expandafter\ifx\csname
natexlab\endcsname\relax\def\natexlab#1{#1}\fi
\expandafter\ifx\csname url\endcsname\relax
  \def\url#1{\texttt{#1}}\fi
\expandafter\ifx\csname urlprefix\endcsname\relax\def\urlprefix{URL}\fi

\bibliographystyle{chicago}      
\bibliography{BibTeX_Radloff_1}   

\end{document}